\documentclass{article}
\textheight23cm \textwidth16cm \hoffset-2cm \voffset-1.3cm
\parskip 2pt plus1pt minus1pt
\usepackage{amsmath,amsthm,amssymb}
\usepackage{array, longtable}
\newtheorem{Theorem}{Theorem}[section]
\newtheorem{lem}[Theorem]{Lemma}
\newtheorem{Remark}[Theorem]{Remark}
\newtheorem{Definition}[Theorem]{Definition}

\newtheorem{Proposition}[Theorem]{Proposition}

\numberwithin{equation}{section}
\usepackage[latin1]{inputenc}

\begin{document}

\title{Improved Singleton bound on insertion-deletion codes and optimal constructions\footnote{
 {\small Email addresses: bocong\_chen@yahoo.com (B. Chen), zghui@lynu.edu.cn (G. Zhang).}}}

\author{Bocong Chen$^1$, Guanghui Zhang$^2$}

\date{\small
${}^1$School of Mathematics, South China University of Technology, Guangzhou 51064, China\\
${}^2$School of Mathematical Sciences,
Luoyang Normal University,
Luoyang, Henan, 471934, China
}

\maketitle
\begin{abstract}
Insertion-deletion codes (insdel codes for short) play an important role in synchronization error correction.
The higher the minimum insdel distance,
the more insdel errors the code can correct.
Haeupler and   Shahrasbi established the Singleton bound for insdel codes:
the minimum insdel distance of any $[n,k]$ linear code over $\mathbb{F}_q$ satisfies   $d\leq2n-2k+2.$
There have been some constructions of insdel codes through Reed-Solomon codes with high capabilities, but none has come close to this bound. Recently, Do Duc {\it et al.} showed that the minimum insdel distance of any $[n,k]$ Reed-Solomon code is no more than $2n-2k$ if $q$ is large enough compared to the code length $n$;  optimal codes that meet the new bound were also constructed explicitly.
The contribution of this paper is twofold.
We first show that the minimum insdel distance of any $[n,k]$ linear code over $\mathbb{F}_q$ satisfies $d\leq2n-2k$ if $n>k>1$.
This result  improves and generalizes  the previously known  results in the literature.
We then give a sufficient condition under which the minimum insdel distance of a two-dimensional
Reed-Solomon code  of length $n$ over $\mathbb{F}_q$ is exactly equal to $2n-4$.
As a consequence, we show that the sufficient condition is  not hard to achieve;
we explicitly construct an infinite family of optimal two-dimensional Reed-Somolom codes meeting the bound.

\medskip
\textbf{MSC:} 94B50.

\textbf{Keywords:} Insertion, deletion, Reed-Solomon code, insdel distance, construction.
\end{abstract}

\section{Introduction}
Insertion-deletion codes (insdel codes for short) are   designed to protect against synchronization errors \cite{hs}, \cite{hss} in communication systems
caused by the loss of positional information of the message.
Insdel codes have found applications in many interesting fields such as DNA storage, DNA analysis \cite{jhsb}, \cite{xw}, language processing \cite{bm}, \cite{och}
and race-track memory error correction \cite{ckvvy}.

The insdel distance between two vectors is defined as the smallest number of insertions and deletions needed to transform
one codeword into another. The minimum insdel distance of a code is defined in the natural way: the minimum insdel distance among all
its distinct codewords.
Like the classical linear codes with respect to the Hamming distance,
the minimum insdel distance of an insdel  code is an important parameter,
which shows  its insdel error-correcting capability.
The higher the minimum insdel distance,
the more insdel errors the code can correct.
The study of insdel codes can be date back to the 1960s \cite{vt}. Insdel codes with   small code lengths were constructed explicitly by using various mathematical methods, e.g., see \cite{bo}, \cite{ma}, \cite{yin}. Sloane \cite{slo} constructed
a family of codes capable of correcting single deletion.

For a fixed code length, it would certainly be nice if
both the code size   (which is a measure of the efficiency
of the code) and the minimum insdel distance   could be
as large as possible. However, as in the Hamming metric
case, these two parameters are restricted each other for any
fixed code length. The Singleton  bound for insdel codes
says that the minimum insdel distance of
any $[n,k]$ linear code over $\mathbb{F}_q$  satisfies  $d\leq2n-2k+2$, see \cite{hs}.
It is, therefore,  natural to consider the problem of constructing  insdel linear codes achieving the Singleton bound with equality. Unfortunately,
there have been few constructions of Reed-Solomon codes with high capabilities, but none of them  meets or comes close to this bound.
For example, Wang {\it et al.} \cite{wms} constructed a class of Reed-Solomon codes with code length $n=5$, dimension $k=2$ and minimum insdel distance $d$ at most $4$;
Tonien \cite{tsn} {\it et al.} constructed a class of
generalized Reed-Solomon codes of length $n$ and dimension $k$ with deletion error-correcting capability of up to $\log_{k+1}n-1$.
These led  to investigate whether the Singleton bound is a tight upper bound for the minimum insdel distance of Reed-Solomon codes.
In 2007, McAven {\it et al.} \cite{mcs}
showed that Reed-Solomon codes of length   $n\geq 3$ and dimension $2$ over prime fields can never meet the Singleton bound.
Recently, Do Duc {\it et al.} \cite{dltx} improved the result by showing that when the field size is sufficiently large compared to the code length,
the Singleton bound cannot be achieved by Reed-Solomon codes; more explicitly, it was shown that the minimum
insdel distance of a $k$-dimensional Reed-Solomon code is at most $2n-2k$ if the code length $n$ satisfies
$n>k>1$ and $q>n^2$ (\cite[Theorem 1]{dltx});
optimal codes that meet the new bound were also constructed explicitly \cite[Theorems 2 and 3]{dltx}.
Very recently, Liu {\it et al.} in \cite{lt}   established a set of sufficient conditions for two-dimensional insdel Reed-Solomon codes to have
optimal asymptotic error-correcting capabilities.

The aforementioned works lead us to the study of Singleton-type
bounds for the minimum insdel distance of general linear codes
and optimal constructions of such  codes. The contribution
of this paper is twofold. We first show that
the minimum insdel distance of any $[n,k]$ linear code over $\mathbb{F}_q$ satisfies  $d\leq2n-2k$ if $n>k>1$.
This result  improves and generalizes   \cite[Theorem 1]{dltx} in two directions:
First, our result holds true for general linear codes, not just Reed-Solomon codes; second, we do not require that $q>n^2$.
More precisely, we obtain the following result.

\vskip 0.1cm
\textbf{Theorem A}~
{\it Suppose $\mathcal{C}$ is an $[n,k]$ linear code over $\mathbb{F}_q$ with $n>k\geq2$.
Then the minimum insdel distance of $\mathcal{C}$ is at most $2n-2k$, i.e.,  $d(\mathcal{C})\leq2n-2k$.}
\vskip 0.1cm

We then give   a sufficient condition under which the minimum insdel distance of  a  two-dimensional
Reed-Solomon code  of length $n$ over $\mathbb{F}_q$   is exactly equal to  $2n-4$. Our approach and conclusion are quite different from those
given in \cite[Theorems 2 and 3]{dltx}: the proofs for \cite[Theorems 2 and 3]{dltx} are long and technical,   and \cite[Theorems 3]{dltx} requires
 conditions on the divisors of $q-1$ and some related values;
our methods are more natural and direct, and our result mainly concerns the order of the finite field.
Our result is stated below.

\vskip 0.1cm
\textbf{Theorem B}~
{\it Let $\mathbb{F}_q$ be a finite field with $q=p^e$ elements, where $p$ is a prime number and $e>1$
is a positive integer.
Suppose $\theta$ is a primitive element of $\mathbb{F}_q$, i.e.,
the order of $\theta$ in the multiplicative group of $\mathbb{F}_q$ is equal to $q-1$.
Let
$$\mathcal{C}=\Big\{\Big(\lambda+\mu\theta^{i_1}, \lambda+\mu\theta^{i_2},\cdots,\lambda+\mu\theta^{i_n}\Big)\,\Big{|}\,\lambda,\mu\in \mathbb{F}_q\Big\}$$
be a two-dimensional Reed-Solomon code of length $n$ over $\mathbb{F}_q$. We assume that $0\leq i_1<i_2<\cdots<i_n\leq q-2$.
If
\begin{itemize}
\item[$(1)$]
$i_{n-1}+i_n<e$ and

\item[$(2)$]
the number of elements of $\big\{i_j-i_k\,|\,1\leq k<j\leq n\big\}$ is equal to $\frac{n(n-1)}{2}$,
\end{itemize}
then the minimum insdel distance of  $\mathcal{C}$ is equal to $2n-4$. In other words, $\mathcal{C}$ is optimal in the sense that it meets the
bound obtained in Theorem {\bf A}.
}
\vskip 0.1cm

As a consequence, we show that the  conditions $(1)$ and $(2)$ listed in Theorem {\bf B} are  not hard to achieve;
we explicitly construct an infinite family of optimal two-dimensional Reed-Somolom codes meeting the bound, as we show below.

\vskip 0.1cm
\textbf{Corollary C}~
{\it Let $p$ be a prime number and let $e>1$ be a positive integer. Let  $i_j=2^{j-1}$ for $1\leq j\leq n$
satisfying  $3\cdot 2^{n-2}<e$. Let $\theta$ be a primitive element in the finite field $\mathbb{F}_{p^e}$.
Then
$$\mathcal{C}=\Big\{\Big(\lambda+\mu\theta^{i_1}, \lambda+\mu\theta^{i_2},\cdots,\lambda+\mu\theta^{i_n}\Big)\,\Big{|}\,\lambda,\mu\in \mathbb{F}_{p^e}\Big\}$$
is a two-dimensional Reed-Solomon code of length $n$ over $\mathbb{F}_{p^e}$ whose minimum insdel distance is equal to
$d(\mathcal{C})=2n-4.$}
\vskip 0.1cm

This paper is organized as follows. In Section $2$, we recall some definitions and basic results about general linear codes,  insdel codes and Reed-Solomon codes.
In Section  $3$,  we give the proof  for    Theorem \textbf{A},   and in Section $4$, the proofs for Theorem  \textbf{B} and Corollary \textbf{C} are presented.
We conclude this paper with remarks on possible future works in Section $5$.
\section{Preliminaries}
Let $\mathbb{F}_q$ be a finite field with $q$ elements and
let $\mathbb{F}_q^n$ be the set of all vectors of length $n$ over $\mathbb{F}_q$.
A subspace $\mathcal{C}$ of $\mathbb{F}_q^n$ over  $\mathbb{F}_q$ is called a linear code of length $n$ over $\mathbb{F}_q$.
The Hamming distance between two vectors $\mathbf{x}, \mathbf{y}\in \mathbb{F}_q^n$,
which is defined to be the number of coordinates in which $\mathbf{x}$ and $\mathbf{y}$ differ, is denoted by $d_H(\mathbf{x},\mathbf{y})$.
The minimum Hamming distance $d_H(\mathcal{C})$ of a code $\mathcal{C}$ is
the smallest Hamming distance among all pairs of distinct codewords of $\mathcal{C}$.
The Hamming weight $w(\mathbf{x})$ of a vector $\mathbf{x}\in \mathbb{F}_q^n$ is the number of nonzero coordinates in $\mathbf{x}$.
It is well known that if $\mathcal{C}$ is a linear code, then the minimum Hamming distance $d_H(\mathcal{C})$ is the same as
the minimum Hamming weight of the nonzero codewords of $\mathcal{C}$.
A linear code $\mathcal{C}$ of length $n$,  dimension $k$ and minimum Hamming distance $d_H(\mathcal{C})$ over $\mathbb{F}_q$ is often
called a $q$-ary $[n, k,d_H(\mathcal{C})]$ code or, if $q$ is clear from the context, an $[n, k,d_H(\mathcal{C})]$ code.
It is well known that an $[n,k,d_H(\mathcal{C})]$ linear code $\mathcal{C}$ over $\mathbb{F}_q$ must obey the Singleton bound,
i.e., the code length $n$, dimension $k$ and minimum Hamming distance $d_H(\mathcal{C})$ satisfy
$$d_H(\mathcal{C})\leq n-k+1.$$
The $[n,k,d_H(\mathcal{C})]$ linear codes $\mathcal{C}$ over $\mathbb{F}_q$ meeting the Singleton bound  are called maximum distance separable code (MDS code for short).

In this paper, for linear codes over $\mathbb{F}_q$, we mainly consider the insdel distance used in high insertion and deletion noise regime.
We restate this definition as follows.

\begin{Definition}
For two vectors $\mathbf{a},\mathbf{b}\in \mathbb{F}_q^n$, the insdel distance $d(\mathbf{a},\mathbf{b})$ between   $\mathbf{a}$ and $\mathbf{b}$ is the minimum number of
insertions and deletions which are needed to transform $\mathbf{a}$ into $\mathbf{b}$.
It can be verified that $d(\mathbf{a},\mathbf{b})$ is indeed a metric on $\mathbb{F}_q^n$.
\end{Definition}

It has been shown that the insdel distance between any two vectors can be characterized via their longest common subsequences.

\begin{lem}\label{estimatevalue1}
\cite[Lemma 1]{dltx} Let $\mathbf{a},\mathbf{b}\in \mathbb{F}_q^n$. Then we have
$$d(\mathbf{a},\mathbf{b})=2n-2\ell,$$
where $\ell$ denotes the length of a longest common subsequence of $\mathbf{a}$ and $\mathbf{b}$.
\end{lem}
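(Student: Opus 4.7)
The plan is to establish the equality $d(\mathbf{a},\mathbf{b})=2n-2\ell$ by proving matching upper and lower bounds, each by an independent argument.

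For the upper bound $d(\mathbf{a},\mathbf{b})\le 2n-2\ell$, I would produce an explicit editing scheme. Fix a longest common subsequence $\mathbf{c}=(c_1,\ldots,c_\ell)$, realized by strictly increasing index sequences $1\le s_1<\cdots<s_\ell\le n$ in $\mathbf{a}$ and $1\le t_1<\cdots<t_\ell\le n$ in $\mathbf{b}$ with $a_{s_i}=b_{t_i}=c_i$. First delete the $n-\ell$ entries of $\mathbf{a}$ at positions outside $\{s_1,\ldots,s_\ell\}$; this reduces $\mathbf{a}$ to $\mathbf{c}$. Then insert the $n-\ell$ remaining entries of $\mathbf{b}$ (those at positions outside $\{t_1,\ldots,t_\ell\}$) into the appropriate gaps of $\mathbf{c}$; the resulting word is $\mathbf{b}$. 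The total cost is $2(n-\ell)=2n-2\ell$ operations.

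For the lower bound $d(\mathbf{a},\mathbf{b})\ge 2n-2\ell$, consider any editing scheme that converts $\mathbf{a}$ into $\mathbf{b}$ using $s$ deletions and $t$ insertions. Since $|\mathbf{a}|=|\mathbf{b}|=n$, the identity $n-s+t=n$ forces $s=t$, so the total cost equals $2s$. I would then argue that there exists a common subsequence of $\mathbf{a}$ and $\mathbf{b}$ of length at least $n-s$: intuitively, the $n-s$ symbols of $\mathbf{a}$ that survive every deletion form such a subsequence, because insertions preserve the relative order of the symbols already present. The cleanest formalization is to normalize the scheme so that all deletions precede all insertions---any adjacent (insert, delete) pair can be swapped, with appropriate index shifts, without changing the final word or the total number of operations. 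After normalization one obtains an intermediate word of length $n-s$ that is simultaneously a subsequence of $\mathbf{a}$ (by discarding $s$ deleted positions) and of $\mathbf{b}$ (by undoing the $s$ final insertions). Hence $n-s\le\ell$, giving $2s\ge 2n-2\ell$, and minimizing over all editing schemes yields the claim.

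Combining the two bounds gives the equality $d(\mathbf{a},\mathbf{b})=2n-2\ell$. The main obstacle is the lower bound: the arbitrary interleaving of insertions and deletions in a general editing scheme must be handled carefully, and the swap-based normalization (``push all deletions to the front'') is the essential ingredient needed to identify a concrete common subsequence from an abstract edit count. An alternative route, which avoids normalization, is to track each original symbol of $\mathbf{a}$ through the editing process and define the surviving subsequence directly, observing that the surviving symbols appear in $\mathbf{b}$ in the same relative order in which they appeared in $\mathbf{a}$; either formalization completes the argument.
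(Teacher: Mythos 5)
The paper does not prove this lemma; it simply cites it from \cite[Lemma 1]{dltx}, so there is no in-paper argument to compare against. Your proof is correct and complete, and it is the standard argument relating insertion-deletion edit distance to the longest common subsequence: the upper bound by explicitly deleting down to an LCS and then inserting up to $\mathbf{b}$, and the lower bound by observing that equal lengths force the deletion count $s$ to equal the insertion count $t$, and that the at least $n-s$ original symbols of $\mathbf{a}$ never touched by a deletion form a common subsequence of $\mathbf{a}$ and $\mathbf{b}$. One small remark on the lower bound: the swap-based normalization (push all deletions before all insertions) is a valid but slightly heavier route than necessary; your alternative of simply tracking the surviving original symbols of $\mathbf{a}$ through an arbitrary editing sequence is cleaner, since those symbols automatically appear in $\mathbf{b}$ in their original relative order, and at most $s$ of them can have been removed (some deletions may even target previously inserted symbols, which only helps). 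Either formalization closes the gap, and the two bounds together give $d(\mathbf{a},\mathbf{b})=2n-2\ell$.
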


Lemma \ref{estimatevalue1} is useful in   calculating  the insdel distance of two vectors in $\mathbb{F}_q^n$.

Similar to the definition of  minimum Hamming distance of linear codes over $\mathbb{F}_q$,
we  give the definition of minimum insdel distance of a linear code over $\mathbb{F}_q$ below,
which is one of the most important parameters as it indicates the insdel error-correcting capability.

\begin{Definition}
An insdel linear code $\mathcal{C}$ of length $n$ is a linear subspace of $\mathbb{F}_q^n$
with minimum insdel distance being defined as
$$d(\mathcal{C})=\min_{\mathbf{c}_1,\mathbf{c}_2\in \mathcal{C}, \mathbf{c}_1\neq \mathbf{c}_2}\{d(\mathbf{c}_1,\mathbf{c}_2)\}.$$
\end{Definition}

An linear code $\mathcal{C}$ over $\mathbb{F}_q$ of length $n$, dimension $k$ and minimum insdel distance $d(\mathcal{C})$ is called an $[n,k,d(\mathcal{C})]$ insdel linear code over $\mathbb{F}_q$.
As we mentioned in the first section, an $[n,k,d(\mathcal{C})]$ insdel linear code $\mathcal{C}$ must obey the following Singleton-type bound.

\begin{Proposition}
(Singleton Bound \cite{hs}) Let $\mathcal{C}$ be an $[n,k,d(\mathcal{C})]$ insdel linear code over $\mathbb{F}_q$. Then
$$d(\mathcal{C})\leq 2n-2k+2.$$
\end{Proposition}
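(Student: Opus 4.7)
The plan is to mimic the classical Singleton-bound proof in the Hamming setting, but to apply Lemma \ref{estimatevalue1} in place of the direct weight argument. The key observation is that if two codewords agree on some $k-1$ coordinates, then the ordered tuple of common values at those coordinates is automatically a common subsequence (in the insdel sense) of length $k-1$.

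First I would produce two distinct codewords that agree on the first $k-1$ coordinates using linear algebra. Consider the projection
\[
\pi : \mathcal{C} \longrightarrow \mathbb{F}_q^{k-1}, \qquad (c_1,c_2,\ldots,c_n) \longmapsto (c_1,c_2,\ldots,c_{k-1}).
\]
Since $\dim_{\mathbb{F}_q}\mathcal{C} = k > k-1 \geq \dim_{\mathbb{F}_q}\pi(\mathcal{C})$, the linear map $\pi$ has a nontrivial kernel. Equivalently, there exists a nonzero codeword $\mathbf{c} \in \mathcal{C}$ whose first $k-1$ entries vanish. Then the distinct codewords $\mathbf{c}_1 := \mathbf{0}$ and $\mathbf{c}_2 := \mathbf{c}$ agree on the first $k-1$ positions.

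Next I would extract from this agreement a long common subsequence. Since $\mathbf{c}_1$ and $\mathbf{c}_2$ both have zeros in positions $1,2,\ldots,k-1$, the string $(0,0,\ldots,0)$ of length $k-1$ is a subsequence of each; hence the length $\ell$ of a longest common subsequence of $\mathbf{c}_1$ and $\mathbf{c}_2$ satisfies $\ell \geq k-1$. Applying Lemma \ref{estimatevalue1} yields
\[
d(\mathcal{C}) \;\leq\; d(\mathbf{c}_1,\mathbf{c}_2) \;=\; 2n - 2\ell \;\leq\; 2n - 2(k-1) \;=\; 2n-2k+2,
\]
which is the desired bound.

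There is essentially no obstacle here beyond being careful about the meaning of ``common subsequence'': agreement on a set of coordinate positions always produces a common subsequence of that length, but not conversely, which is precisely why this bound is typically loose and why the authors' Theorem A is able to shave off the additive $+2$. If one wanted a slightly stronger statement (for instance, to ensure the two codewords can be taken to agree on \emph{any} prescribed $k-1$ positions), one would replace $\pi$ by the projection onto the chosen coordinate set; the argument is otherwise unchanged.
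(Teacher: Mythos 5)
Your argument is correct, and it takes a route that is genuinely different from the paper's. The paper does not reprove this Proposition directly from first principles; instead, in Remark 3.4, it derives the bound by chaining two separate facts: the classical Hamming-metric Singleton bound $d_H(\mathcal{C})\leq n-k+1$ together with the general inequality $d(\mathcal{C})\leq 2d_H(\mathcal{C})$ established in Lemma 3.3 (which is itself built on Lemma 2.2). Your proof bypasses the Hamming-metric Singleton bound entirely: you use rank--nullity to exhibit a nonzero codeword supported outside the first $k-1$ coordinates, pair it with the zero codeword, observe that agreement on those positions certifies a common subsequence of length $k-1$, and then apply Lemma 2.2 directly. Both approaches ultimately exploit the same one-way implication --- coordinatewise agreement produces a common subsequence, but not conversely --- so both obtain the factor-of-two bound. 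What the paper's modular route buys is reusability: Lemma 3.3 ($d\leq 2d_H$) is invoked again in Remark 3.4 to dispose of the non-MDS case of Theorem~A. What your one-shot route buys is self-containment and transparency: it makes visible exactly which two codewords witness the bound, without appealing to the classical Singleton bound as a black box, and it foreshadows the paper's proof of Lemma 3.5 (the MDS case of Theorem~A), which uses the very same idea of producing a pair of codewords with a guaranteed long run of agreeing coordinates. One small presentational remark: when $k=1$ your projection is onto $\mathbb{F}_q^0$ and the argument degenerates gracefully (the bound $d\leq 2n$ is vacuous), so the proof does cover the full range $k\geq 1$.
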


In the rest of this section we give the definition and some basic facts about Reed-Solomon codes.
Let $n>k$ be two positive integers.
Let $\mathbb{F}_q$ be a finite field with $q$ elements and choose $n$ distinct elements $\alpha_1, \alpha_2,\cdots,\alpha_n$ of
$\mathbb{F}_q$.
Denote $\mathbb{F}_q^{<k}[x]$ by the set of polynomials in $\mathbb{F}_q[x]$ of degree less that $k$.
For $1\leq k<n$,   the Reed-Solomon code of length $n$ and dimension $k$ with code locators $\{\alpha_1, \alpha_2,\cdots,\alpha_n\}$
is defined as
$$\textbf{RS}_{n,k}(\alpha)=\{(f(\alpha_1), f(\alpha_2),\cdots,f(\alpha_n))\,|\,f(x)\in \mathbb{F}_q^{<k}[x]\}.$$
Then $\textbf{RS}_{n,k}(\alpha)$ is an $[n,k,n-k+1]$ linear code over $\mathbb{F}_q$ with length $n\leq q$.
In particular, Reed-Solomon codes are MDS codes.

\section{Proof of Theorem A}
Let $\mathcal{C}$ be a linear code of length $n$ over $\mathbb{F}_q$. As before, the minimum Hamming distance of the linear code
$\mathcal{C}$ is denoted by $d_H(\mathcal{C})$; the minimum insdel distance of  $\mathcal{C}$ is denoted by
$d(\mathcal{C})$.
For two typical codewords $\mathbf{a},\mathbf{b}$ of $\mathcal{C}$,  the Hamming (resp. insdel) distance between $\mathbf{a}$ and $\mathbf{b}$ is denoted by
$d_H(\mathbf{a},\mathbf{b})$ (resp. $d(\mathbf{a},\mathbf{b})$). It follows from Lemma \ref{estimatevalue1} that the insdel distance between
$\mathbf{a}$ and $\mathbf{b}$ is less than or equal to  $2n$, i.e., $d(\mathbf{a},\mathbf{b})\leq 2n$.  In order to prove Theorem {\bf A},
we first need to improve this upper bound, as we show below.

\begin{lem}\label{estimatevalue2}
Let  $\mathbf{a}$ and $\mathbf{b}$ be two   vectors of length $n$ over $\mathbb{F}_q$.
Then we have
$$d(\mathbf{a},\mathbf{b})\leq 2d_H(\mathbf{a},\mathbf{b}).$$
\end{lem}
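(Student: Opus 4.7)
The plan is to reduce the claim to Lemma \ref{estimatevalue1} by exhibiting a sufficiently long common subsequence of $\mathbf{a}$ and $\mathbf{b}$ obtained directly from their agreement positions in the Hamming sense. The underlying observation is that two vectors of length $n$ agreeing in exactly $n-d_H(\mathbf{a},\mathbf{b})$ coordinates automatically share a common subsequence of that length, because reading off those coordinates in increasing order of index produces the \emph{same} string when extracted from $\mathbf{a}$ as when extracted from $\mathbf{b}$.

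Concretely, I would first set $h=d_H(\mathbf{a},\mathbf{b})$ and let $S=\{j_1<j_2<\cdots<j_{n-h}\}\subseteq\{1,2,\ldots,n\}$ denote the set of coordinates where $\mathbf{a}$ and $\mathbf{b}$ agree, which has cardinality $n-h$ by definition of the Hamming distance. Next I would observe that the string $(a_{j_1},a_{j_2},\ldots,a_{j_{n-h}})$ is a subsequence of $\mathbf{a}$ (trivially, by the choice of indices) and, since $a_{j_t}=b_{j_t}$ for every $t$, it is simultaneously equal to $(b_{j_1},b_{j_2},\ldots,b_{j_{n-h}})$, which is a subsequence of $\mathbf{b}$. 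Hence this string is a common subsequence of $\mathbf{a}$ and $\mathbf{b}$, so the length $\ell$ of a longest common subsequence satisfies $\ell\geq n-h$.

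Finally, I would invoke Lemma \ref{estimatevalue1} to obtain
\[
d(\mathbf{a},\mathbf{b})=2n-2\ell\leq 2n-2(n-h)=2h=2d_H(\mathbf{a},\mathbf{b}),
\]
which is exactly the desired bound. There is no substantive obstacle in this argument; the only thing one has to be careful about is the (almost tautological) step that preserving the order of the agreement indices in both vectors yields a genuine \emph{common} subsequence, rather than merely a common (unordered) multiset of symbols.
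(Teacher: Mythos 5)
Your argument is correct and is essentially the same as the paper's: both establish $\ell\geq n-d_H(\mathbf{a},\mathbf{b})$ and then apply Lemma \ref{estimatevalue1}. The only difference is that you spell out explicitly why the agreement positions yield a common subsequence, a step the paper treats as an immediate observation.
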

\begin{proof}
Let $\ell$ denote the length of a longest common subsequence of   $\mathbf{a}$ and $\mathbf{b}$.
Observe that $\ell\geq n-d_H(\mathbf{a},\mathbf{b})$ and then by Lemma \ref{estimatevalue1} we immediately  have
$$d(\mathbf{a},\mathbf{b})=2n-2\ell\leq 2n-2\big(n-d_H(\mathbf{a},\mathbf{b})\big)=2d_H(\mathbf{a},\mathbf{b}).$$
The lemma is proved.
\end{proof}

We have shown that the insdel distance of arbitrary two vectors in $\mathbb{F}_q^n$ is at most twice of their Hamming distance.
We next show that the same conclusion holds for the minimum insdel distance and the minimum Hamming distance of any linear code.
\begin{lem}\label{estimatevalue3}
Let $\mathcal{C}$ be a non-zero linear code of length $n$ over $\mathbb{F}_q$. We then have
$$d(\mathcal{C})\leq 2d_H(\mathcal{C}).$$
\end{lem}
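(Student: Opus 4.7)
The plan is to deduce this from the pointwise bound in Lemma \ref{estimatevalue2} by simply choosing a pair of codewords that witnesses the minimum Hamming distance. Since $\mathcal{C}$ is nonzero and a linear code, the minimum Hamming distance $d_H(\mathcal{C})$ is attained by some pair of distinct codewords $\mathbf{c}_1, \mathbf{c}_2\in\mathcal{C}$ (for instance, taking $\mathbf{c}_2 = \mathbf{0}$ and $\mathbf{c}_1$ any nonzero codeword of minimum Hamming weight).

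First I would fix such a pair $\mathbf{c}_1\neq\mathbf{c}_2$ in $\mathcal{C}$ with $d_H(\mathbf{c}_1,\mathbf{c}_2)=d_H(\mathcal{C})$. Then, by the definition of the minimum insdel distance of the code, we have $d(\mathcal{C})\leq d(\mathbf{c}_1,\mathbf{c}_2)$. Applying Lemma \ref{estimatevalue2} to this specific pair gives $d(\mathbf{c}_1,\mathbf{c}_2)\leq 2d_H(\mathbf{c}_1,\mathbf{c}_2)=2d_H(\mathcal{C})$. Chaining these two inequalities yields $d(\mathcal{C})\leq 2d_H(\mathcal{C})$, which is the desired bound.

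There is no real obstacle here: the content is entirely in Lemma \ref{estimatevalue2}, and Lemma \ref{estimatevalue3} is a one-line corollary obtained by passing to a minimizing pair. The only thing to be careful about is ensuring such a minimizing pair exists, which is immediate from $\mathcal{C}$ being nonzero (so that there are at least two distinct codewords over which to minimize).
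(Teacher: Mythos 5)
Your proof is correct and follows exactly the paper's argument: pick a pair of distinct codewords achieving the minimum Hamming distance, apply Lemma \ref{estimatevalue2} to that pair, and combine with the definition of $d(\mathcal{C})$ as a minimum. The extra remark on why a minimizing pair exists (nonzero linear code) is fine but was left implicit in the paper.
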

\begin{proof}\label{estimatevalue4}
Choose two distinct codewords $\mathbf{a_0}$ and $\mathbf{b_0}$ of $\mathcal{C}$ such  that $d_H(\mathbf{a_0},\mathbf{b_0})=d_H(\mathcal{C})$.
Recall that
$d(\mathcal{C})=\min\{d(\mathbf{a},\mathbf{b})\,|\,\mathbf{a},\mathbf{b}\in \mathcal{C}, \mathbf{a}\neq \mathbf{b}\}$.
Thus by Lemma \ref{estimatevalue2} we have
$$d(\mathcal{C})\leq d(\mathbf{a_0},\mathbf{b_0})\leq 2d_H(\mathbf{a_0},\mathbf{b_0})=2d_H(\mathcal{C}).$$
We are done.
\end{proof}

\begin{Remark}\label{remark}{\rm
Note, by the classical Singleton bound of an $[n,k,d_H(\mathcal{C})]$ linear code $\mathcal{C}$,
that   $d_H(\mathcal{C})\leq n-k+1$.
We therefore conclude from Lemma \ref{estimatevalue3} that the minimum insdel distance of an $[n,k]$ linear code
$\mathcal{C}$ must be less than or equal to
$2(n-k+1)$, i.e.,
$$d(\mathcal{C})\leq 2n-2k+2.$$
This  Singleton bound for an insdel code was exhibited in \cite{hs}.
We also note by Lemma \ref{estimatevalue3} that if the minimum Hamming distance of $\mathcal{C}$ is at most $n-k$, then
$d(\mathcal{C})\leq2d_H(\mathcal{C})=2n-2k$, proving Theorem {\bf A} in this special case.
By virtue of this fact, for the goal of completing the proof Theorem {\bf A},
we only need to restrict ourself to the case where $\mathcal{C}$ is an MDS code.}
\end{Remark}

The following lemma gives the desired result, which is a crucial step in the process of proving our Theorem {\bf A}.
\begin{lem}\label{MDS}
Let $\mathcal{C}$ be a linear $[n,k]$ MDS code over $\mathbb{F}_q$ with $n>k\geq 2$. Then
$$d(\mathcal{C})\leq 2n-2k.$$
\end{lem}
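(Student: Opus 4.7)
The plan is to invoke the Remark preceding the lemma to reduce to the MDS case. Indeed, if $\mathcal{C}$ is not MDS, then $d_H(\mathcal{C})\leq n-k$, and Lemma \ref{estimatevalue3} already gives $d(\mathcal{C})\leq 2d_H(\mathcal{C})\leq 2n-2k$. So I may assume $\mathcal{C}$ is MDS, in which case $d_H(\mathcal{C})=n-k+1$ and Lemma \ref{estimatevalue3} only yields $d(\mathcal{C})\leq 2n-2k+2$. I need to shave off the extra $2$, and by Lemma \ref{estimatevalue1} it is enough to exhibit two distinct codewords $\mathbf{a},\mathbf{b}\in\mathcal{C}$ whose longest common subsequence has length at least $k$.

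The construction will exploit the defining property of MDS codes: every set of $k$ coordinate positions is an information set, so prescribing the values at any $k$ positions determines a unique codeword. Since $n\geq k+1$, both $\{1,2,\ldots,k\}$ and $\{2,3,\ldots,k+1\}$ are information sets. I would define $\mathbf{a}$ as the unique codeword with $a_1=1$, $a_2=a_3=\cdots=a_k=0$, and $\mathbf{b}$ as the unique codeword with $b_2=1$, $b_3=b_4=\cdots=b_{k+1}=0$. Then by design $a_t=b_{t+1}$ for every $t=1,\ldots,k$, so taking the index sequences $i_t=t$ (in $\mathbf{a}$) and $j_t=t+1$ (in $\mathbf{b}$) exhibits the common subsequence $(1,0,0,\ldots,0)$ of length $k$. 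Moreover $\mathbf{a}\neq\mathbf{b}$ because $a_2=0\neq 1=b_2$.

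Putting it together, let $\ell$ denote the length of a longest common subsequence of $\mathbf{a}$ and $\mathbf{b}$; the construction gives $\ell\geq k$, and therefore by Lemma \ref{estimatevalue1},
\[
d(\mathcal{C})\;\leq\;d(\mathbf{a},\mathbf{b})\;=\;2n-2\ell\;\leq\;2n-2k,
\]
completing the proof.

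The conceptual obstacle, rather than a technical one, is recognizing that Lemma \ref{estimatevalue3} is intrinsically lossy in the MDS case: a Hamming-distance-minimizing pair of codewords agrees on only $k-1$ aligned positions, which gives the weaker bound. The decisive idea is to allow a \emph{shifted} match, trading position-alignment for an extra agreement, and the MDS hypothesis is exactly what guarantees enough freedom to realize such a shift by one position. Once that observation is made, the construction is essentially forced, and verifying $\mathbf{a}\neq\mathbf{b}$ reduces to a one-line comparison.
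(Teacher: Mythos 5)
Your proof is correct and follows essentially the same route as the paper: both exploit the MDS structure to construct two distinct codewords agreeing on a window of $k$ coordinates shifted by one position, producing a common subsequence of length $k$ and hence $d(\mathcal{C})\leq 2n-2k$ via Lemma~\ref{estimatevalue1}. The only cosmetic differences are that the paper invokes the minimum-weight-codeword-in-every-$d_H$-coordinates characterization of MDS codes while you use the every-$k$-positions-is-an-information-set characterization, and your opening reduction to the MDS case is superfluous since the lemma already hypothesizes that $\mathcal{C}$ is MDS.
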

\begin{proof}
We use a characterization of MDS codes to complete the proof:
An $[n,k,d_H]$ linear code $\mathcal{C}$ over  $\mathbb{F}_q$ is MDS
if and only if $\mathcal{C}$ has a
minimum weight codeword in any $d_H$ coordinates (see \cite[Chap. 11, Theorem 4]{ms}).
Now $\mathcal{C}$ is an $[n,k]$ MDS code over $\mathbb{F}_q$ with  $n>k\geq 2$, which gives that $\mathcal{C}$ has a
minimum weight codeword in any $n-k+1$ coordinates. First we choose the last  $n-k+1$ coordinates to be non-zero, and we suppose further that the
$k$th coordinate is equal to $1$,  i.e.,


$$\mathbf{a}=\big(\underbrace{0,\cdots,0}_{k-1},1,*,\cdots,*\big)$$
where $*$  denotes some non-zero elements of $\mathbb{F}_q$. It follows that suitable non-zero elements $*$ of
$\mathbb{F}_q$ can be found such that
$\mathbf{a}$ is a codeword of $\mathcal{C}$.
Likewise, suitable non-zero elements $\star$ of $\mathbb{F}_q$ can be found such that
$$\mathbf{b}=(\star,\underbrace{0,\cdots,0}_{k-1},1,\star,\cdots,\star)$$
is also a codeword of $\mathcal{C}$. Now we have two distinct codewords $\mathbf{a}\in \mathcal{C}$ and $\mathbf{b}\in \mathcal{C}$.
Thus the length $\ell$ of a longest common subsequence of $\mathbf{a}$ and $\mathbf{b}$ satisfies $\ell\geq k$.
Therefore,  by Lemma \ref{estimatevalue1} we get that
$$d(\mathcal{C})\leq d(\mathbf{a},\mathbf{b})=2n-2\ell\leq 2n-2k.$$
We are done.
\end{proof}

By Remark \ref{remark} and Lemma \ref{MDS}, we immediately arrive at Theorem {\bf A},
which says that the minimum insdel distance of any $[n,k]$ linear code with $n>k\geq2$ is at most $2n-2k$.

\section{Proofs of Theorem B and  Corollary C}
The primary goal of this section is to present a proof for Theorem {\bf B},
which gives a sufficient condition to guarantee a  two-dimensional
Reed-Solomon code  of length $n$ over $\mathbb{F}_q$ to have minimum insdel distance   $2n-4$.
Such codes  are optimal in the sense that they meet the upper bound
obtained in Theorem {\bf A}.

We first fix some notation.
Let $\mathbb{F}_q$ be a finite field with $q=p^e$ elements,
where $p$ is a prime number and $e>1$ is a positive integer.
Let $\mathbb{F}_q^*$ be the multiplicative group of $\mathbb{F}_q$,
and let $\theta$ be a primitive element of   $\mathbb{F}_q$, i.e., $\theta$ generates the cyclic group $\mathbb{F}_q^*$.
Then $e$ is the degree of the minimal polynomial of $\theta$ over the prime field $\mathbb{F}_p$.

Let $i_1, i_2, \cdots,i_n$ be $n$ integers such that $0\leq i_1<i_2< \cdots<i_n\leq q-2$, where $n\geq 3$.
Let $\mathcal{C}$ be a two-dimensional Reed-Solomon code of length $n$ over $\mathbb{F}_q$
with code locators $\{\theta^{i_1}, \theta^{i_2},\cdots,\theta^{i_n}\}$, i.e.,
\begin{equation}\label{code}
\mathcal{C}=\Big\{\Big(\lambda+\mu\theta^{i_1}, \lambda+\mu\theta^{i_2},\cdots,\lambda+\mu\theta^{i_n}\Big)\,\Big{|}\,\lambda,\mu\in \mathbb{F}_q\Big\}.
\end{equation}
Our goal is, therefore, converting to find suitable numbers $i_1,i_2,\cdots, i_n$ such that $\mathcal{C}$ has minimum insdel distance $2n-4$.
For this purpose,
let
$$D=\big\{i_j-i_k\,\,|\,\,1\leq k<j\leq n\big\},$$
then $$D\subseteq\big\{1,2,\cdots,q-2\big\}.$$

We are now in a position to prove Theorem {\bf B}.

\begin{proof}
Suppose $\mathcal{C}$ is a two-dimensional Reed-Solomon code of length $n$ over $\mathbb{F}_q$ as given in (\ref{code}).
According to Theorem {\bf A} we have that $d(\mathcal{C})\leq 2n-4$. Thus it remains to show that  $d(\mathcal{C})\geq 2n-4$.
To this end, by  virtue of Lemma \ref{estimatevalue1},  it is enough to  prove the following claim:

\textbf{Claim}: $\ell(\mathbf{a},\mathbf{b})\leq 2$ for any distinct codewords $\mathbf{a},\mathbf{b}\in\mathcal{ C}$,
where $\ell(\mathbf{a}, \mathbf{b})$ is the length of a longest common subsequence of $\mathbf{a}$ and $\mathbf{b}$.

Now we consider $7$ cases separately to investigate the number $\ell(\mathbf{a}, \mathbf{b})$.


Case $1$: Either $\mathbf{a}$ or $\mathbf{b}$ is the zero vector. It is trivial to see that  $\ell(\mathbf{a}, \mathbf{b})\leq2$ because $\mathcal{C}$ is an MDS code with parameters $[n,2,n-1]$.

Henceforth, we can assume that both $\mathbf{a}$ and $\mathbf{b}$ are non-zero.

Case $2$: Assume that
$$\mathbf{a}=\big(\lambda,\lambda,\cdots,\lambda\big), ~~\mathbf{b}=\big(\mu,\mu,\cdots,\mu\big),$$
where $\lambda,\mu$ are  elements of $\mathbb{F}_q$ with $\lambda\neq \mu$. Then it is easy to see that $\ell(\mathbf{a}, \mathbf{b})\leq2$.

Case $3$: Assume that
$$\mathbf{a}=(\lambda,\lambda,\cdots,\lambda), ~~\mathbf{b}=(\mu\theta^{i_1},\mu\theta^{i_2},\cdots,\mu\theta^{i_n}),$$
where $\lambda,\mu\in \mathbb{F}_q^*$.
Suppose otherwise that there exist three integers $r_1,r_2,r_3$ satisfying $1\leq r_1<r_2<r_3\leq n$ such that
\begin{equation*}
\begin{cases}
\lambda=\mu\theta^{i_{r_1}},\\
\lambda=\mu\theta^{i_{r_2}},\\
\lambda=\mu\theta^{i_{r_3}}.
\end{cases}
\end{equation*}
This gives
$$\theta^{i_{r_1}}=\theta^{i_{r_2}}=\theta^{i_{r_3}}=\frac{\lambda}{\mu}.$$
Thus
$$\theta^{i_{r_2}-i_{r_1}}=\theta^{i_{r_3}-i_{r_1}}=1.$$
Recall that $i_{r_2}>i_{r_1}$,  $i_{r_3}>i_{r_1}$ and $0\leq i_{r_j}\leq q-2$ for $j=1,2,3$.    We then have
$$i_{r_2}-i_{r_1}=i_{r_3}-i_{r_1}.$$
This contradicts to the condition $(2)$ in the theorem. It follows that $\ell(\mathbf{a},\mathbf{b})\leq 2$.

Case $4$: Assume that
$$\mathbf{a}=(\lambda,\lambda,\cdots,\lambda), ~~ \mathbf{b}=(\lambda_1+\mu_1\theta^{i_1},\lambda_1+\mu_1\theta^{i_2},\cdots,\lambda_1+\mu_1\theta^{i_n}),$$
where $\lambda,\lambda_1,\mu_1\in \mathbb{F}_q^*$.
Suppose otherwise that there exist three integers $r_1,r_2,r_3$ satisfying $1\leq r_1<r_2<r_3\leq n$ such that
\begin{equation*}
\begin{cases}
\lambda=\lambda_1+\mu_1\theta^{i_{r_1}},\\
\lambda=\lambda_1+\mu_1\theta^{i_{r_2}},\\
\lambda=\lambda_1+\mu_1\theta^{i_{r_3}}.
\end{cases}
\end{equation*}
This leads to
$$\theta^{i_{r_1}}=\theta^{i_{r_2}}=\theta^{i_{r_3}}=\frac{\lambda-\lambda_1}{\mu_1}.$$
Thus
$$\theta^{i_{r_2}-i_{r_1}}=\theta^{i_{r_3}-i_{r_1}}=1,$$
which gives
$$i_{r_2}-i_{r_1}=i_{r_3}-i_{r_1}.$$
This  contradicts to the condition $(2)$ in the theorem again. It follows that $\ell(\mathbf{a},\mathbf{b})\leq 2$.

Case $5$: Assume that
$$\mathbf{a}=(\lambda\theta^{i_1},\lambda\theta^{i_2},\cdots,\lambda\theta^{i_n}), ~~ \mathbf{b}=(\mu\theta^{i_1},\mu\theta^{i_2},\cdots,\mu\theta^{i_n}),$$
where $\lambda,\mu\in \mathbb{F}_q^*$ with $\lambda\neq \mu$.
Suppose otherwise that there exist six integers $k_1, k_2, k_3, r_1,r_2,r_3$ satisfying $1\leq k_1< k_2< k_3\leq n, ~1\leq r_1<r_2<r_3\leq n$ such that
\begin{equation*}
\begin{cases}
\lambda\theta^{i_{k_1}}=\mu\theta^{i_{r_1}},\\
\lambda\theta^{i_{k_2}}=\mu\theta^{i_{r_2}},\\
\lambda\theta^{i_{k_3}}=\mu\theta^{i_{r_3}}.
\end{cases}
\end{equation*}
This gives
$$\theta^{i_{k_1}-i_{r_1}}=\theta^{i_{k_2}-i_{r_2}}=\theta^{i_{k_3}-i_{r_3}}=\frac{\mu}{\lambda},$$
which implies that
\begin{equation*}
\begin{cases}
i_{k_1}-i_{r_1}\equiv i_{k_2}-i_{r_2}(\textrm{mod}~q-1),\\
i_{k_1}-i_{r_1}\equiv i_{k_3}-i_{r_3}(\textrm{mod}~q-1).
\end{cases}
\end{equation*}
This is equivalent to
\begin{equation*}
\begin{cases}
i_{k_2}-i_{k_1}\equiv i_{r_2}-i_{r_1}(\textrm{mod}~q-1),\\
i_{k_3}-i_{k_1}\equiv i_{r_3}-i_{r_1}(\textrm{mod}~q-1).
\end{cases}
\end{equation*}
Thus we have
\begin{equation*}
\begin{cases}
i_{k_2}-i_{k_1}= i_{r_2}-i_{r_1},\\
i_{k_3}-i_{k_1}= i_{r_3}-i_{r_1}.
\end{cases}
\end{equation*}
This is a contradiction again. It follows that $\ell(\mathbf{a},\mathbf{b})\leq 2$.

Case $6$: Assume that
$$\mathbf{a}=(\lambda\theta^{i_1},\lambda\theta^{i_2},\cdots,\lambda\theta^{i_n}),~~
\mathbf{b}=(\lambda_1+\mu_1\theta^{i_1},\lambda_1+\mu_1\theta^{i_2},\cdots,\lambda_1+\mu_1\theta^{i_n}),$$
where $\lambda,\lambda_1,\mu_1\in \mathbb{F}_q^*$.
Suppose otherwise that there exist six integers $k_1, k_2, k_3, r_1,r_2,r_3$ satisfying $1\leq k_1< k_2< k_3\leq n,~ 1\leq r_1<r_2<r_3\leq n$ such that
\begin{equation*}
\begin{cases}
\lambda\theta^{i_{k_1}}=\lambda_1+\mu_1\theta^{i_{r_1}},\\
\lambda\theta^{i_{k_2}}=\lambda_1+\mu_1\theta^{i_{r_2}},\\
\lambda\theta^{i_{k_3}}=\lambda_1+\mu_1\theta^{i_{r_3}}.
\end{cases}
\end{equation*}
In the matrix version, that is  equivalent to saying that
$$
\begin{pmatrix}
\theta^{i_{k_1}} & \theta^{i_{r_1}} & 1\\
\theta^{i_{k_2}} & \theta^{i_{r_2}} & 1\\
\theta^{i_{k_3}} & \theta^{i_{r_3}} & 1
\end{pmatrix}
\begin{pmatrix}
\lambda \\ -\mu_1 \\ -\lambda_1
\end{pmatrix}=\mathbf{0}.
$$
Since
$$
\begin{pmatrix}
\lambda \\ -\mu_1 \\ -\lambda_1
\end{pmatrix}\neq\mathbf{0},
$$
we have that the determinant of the coefficient matrix is zero, i.e.,
$$
\begin{vmatrix}
\theta^{i_{k_1}} & \theta^{i_{r_1}} & 1\\
\theta^{i_{k_2}} & \theta^{i_{r_2}} & 1\\
\theta^{i_{k_3}} & \theta^{i_{r_3}} & 1
\end{vmatrix}=0.
$$
Now suppose $x$ is an indeterminate  over the finite field $\mathbb{F}_q$.  Then we have the polynomial
$$f(x)=
\begin{vmatrix}
x^{i_{k_1}} & x^{i_{r_1}} & 1\\
x^{i_{k_2}} & x^{i_{r_2}} & 1\\
x^{i_{k_3}} & x^{i_{r_3}} & 1
\end{vmatrix}.
$$
It is clear that  $f(\theta)=0$. On the other hand, we can expand the determinant to have
\begin{eqnarray*}
f(x)&=&
\begin{vmatrix}
x^{i_{k_1}} & x^{i_{r_1}} & 1\\
x^{i_{k_2}} & x^{i_{r_2}} & 1\\
x^{i_{k_3}} & x^{i_{r_3}} & 1
\end{vmatrix}\\
&=& \begin{vmatrix}
x^{i_{k_1}}-x^{i_{k_3}} & x^{i_{r_1}}-x^{i_{r_3}} & 0\\
x^{i_{k_2}}-x^{i_{k_3}} & x^{i_{r_2}}-x^{i_{r_3}} & 0\\
x^{i_{k_3}} & x^{i_{r_3}} & 1
\end{vmatrix}\\
&=&(x^{i_{k_1}}-x^{i_{k_3}})(x^{i_{r_2}}-x^{i_{r_3}})-(x^{i_{k_2}}-x^{i_{k_3}})(x^{i_{r_1}}-x^{i_{r_3}})\\
&=& x^{i_{k_1}+i_{r_2}}+x^{i_{k_2}+i_{r_3}}+x^{i_{k_3}+i_{r_1}}-x^{i_{k_1}+i_{r_3}}-x^{i_{k_2}+i_{r_1}}-x^{i_{k_3}+i_{r_2}}.
\end{eqnarray*}
Observing  that $k_1< k_2< k_3$ and $r_1<r_2<r_3$, we have
$$i_{k_1}+i_{r_2}<\min\{i_{k_2}+i_{r_3},i_{k_1}+i_{r_3}\},~~
i_{k_2}+i_{r_1}<\min\{i_{k_3}+i_{r_1},i_{k_3}+i_{r_2}\}.$$
Thus the term of the minimum degree in the polynomial $f(x)$ is $x^{i_{k_1}+i_{r_2}}$ or $-x^{i_{k_2}+i_{r_1}}$.
As $i_{k_1}+i_{r_2}\neq i_{k_2}+i_{r_1}$, otherwise we would have $i_{k_2}-i_{k_1}=i_{r_2}-i_{r_1}$, a contradiction,
we conclude that $f(x)\neq 0$.

Since $f(x)$ is a non-zero polynomial, $f(\theta)=0$ and $e$ is the degree of the minimal polynomial of $\theta$ over $\mathbb{F}_p$,
we obtain that $e\leq\deg(f(x))$.
Thus it is easy to see that the degree $\deg(f(x))$ of $f(x)$ satisfies
$$e\leq\deg(f(x))\leq i_{n-1}+i_n< e.$$
The last inequality is from our condition $(1)$.
This is a contradiction, and we conclude the proof of $\ell(\mathbf{a},\mathbf{b})\leq 2$.

Case $7$: Assume that
$$\mathbf{a}=(\lambda_2+\mu_2\theta^{i_1},\lambda_2+\mu_2\theta^{i_2},\cdots,\lambda_2+\mu_2\theta^{i_n}),~~
\mathbf{b}=(\lambda_1+\mu_1\theta^{i_1},\lambda_1+\mu_1\theta^{i_2},\cdots,\lambda_1+\mu_1\theta^{i_n}),$$
where $\lambda_2,\mu_2,\lambda_1,\mu_1\in \mathbb{F}_q^*$.
Suppose otherwise that there exist six integers $k_1, k_2, k_3, r_1,r_2,r_3$ satisfying $1\leq k_1< k_2< k_3\leq n, 1\leq r_1<r_2<r_3\leq n$ such that
\begin{equation*}
\begin{cases}
\lambda_2+\mu_2\theta^{i_{k_1}}=\lambda_1+\mu_1\theta^{i_{r_1}},\\
\lambda_2+\mu_2\theta^{i_{k_2}}=\lambda_1+\mu_1\theta^{i_{r_2}},\\
\lambda_2+\mu_2\theta^{i_{k_3}}=\lambda_1+\mu_1\theta^{i_{r_3}}.
\end{cases}
\end{equation*}
We then have
$$
\begin{pmatrix}
\theta^{i_{k_1}} & \theta^{i_{r_1}} & 1\\
\theta^{i_{k_2}} & \theta^{i_{r_2}} & 1\\
\theta^{i_{k_3}} & \theta^{i_{r_3}} & 1
\end{pmatrix}
\begin{pmatrix}
\mu_2 \\ -\mu_1 \\ \lambda_2-\lambda_1
\end{pmatrix}=\mathbf{0}.
$$
Since
$$
\begin{pmatrix}
\mu_2 \\ -\mu_1 \\ \lambda_2-\lambda_1
\end{pmatrix}\neq\mathbf{0},
$$
we have that
$$
\begin{vmatrix}
\theta^{i_{k_1}} & \theta^{i_{r_1}} & 1\\
\theta^{i_{k_2}} & \theta^{i_{r_2}} & 1\\
\theta^{i_{k_3}} & \theta^{i_{r_3}} & 1
\end{vmatrix}=0.
$$
Using the same discussion as in the Case $6$, we get a contradiction and conclude the proof of $\ell(\mathbf{a},\mathbf{b})\leq 2$.

Based on the above $7$ cases, we have established the claim. Then by Lemma \ref{estimatevalue1} we obtain $d(\mathcal{C})\leq 2n-4$,
which forces $d(\mathcal{C})= 2n-4$ by Theorem {\bf A}. This completes the proof.
\end{proof}

With Theorem {\bf B}, we can prove Corollary {\bf C} easily, which generates an infinite family of optimal two-dimensional Reed-Somolom codes meeting the bound in Theorem {\bf A}.

\begin{proof}
It is enough to check the conditions (1) and (2) in Theorem {\bf B} are all satisfied. Condition (1) in Theorem {\bf B} is satisfied since
$i_{n-1}+i_n=3\cdot 2^{n-2}<e$. Next, let us compute $|D|$. Suppose that
$$2^i-2^j=2^s-2^t~\big(0\leq i,j,s,t\leq n-1,~ j<i,~ t<s\big).$$
If $j\neq t$, then without loss of generality, assume that $j>t$. This gives $2^{i-t}-2^{j-t}=2^{s-t}-1$, a contradiction,
which shows that $j=t$ and  $i=s$. Therefore $|D|=\frac{1}{2}n(n-1)$. We have shown that Condition (2) in Theorem {\bf B} is satisfied.
By Theorem {\bf B}, we conclude that the code $\mathcal{C}$  is a two-dimensional Reed-Solomon code  of length $n$  with  minimum insdel distance $d(\mathcal{C})=2n-4$.
\end{proof}

\section{Conclusion and future work}
In this paper, we showed that if
$\mathcal{C}$ is an $[n,k]$ linear code over $\mathbb{F}_q$ with $n>k\geq2$, then the minimum insdel distance of $\mathcal{C}$
is at most $2n-2k$ (see Theorem {\bf A}). This result significantly improves the previously known results in \cite{dltx} and \cite{hs} as we mentioned in the Introduction section.
We gave   a sufficient condition under which a  two-dimensional
Reed-Solomon code  of length $n$ over $\mathbb{F}_q$   has minimum insdel distance   $2n-4$ (see Theorem {\bf B});
as a corollary, we showed that the conditions listed in Theorem {\bf B} are easy to achieve (see Corollary {\bf C}).
Consequently, we have explicitly constructed an infinite family of optimal two-dimensional Reed-Somolom codes meeting the bound in Theorem {\bf A}.
Comparing with \cite{dltx}, our methods are more direct and easy to understand.

A possible
direction for future  work is to find non-MDS codes that meet the bound exhibited in Theorem {\bf A}; if this can be done,
it may lead us to know more about insertion-deletion metric. Apart from this problem,
there could be many other interesting problems associated with
insertion-deletion codes. For instance, it would be interesting to establish other bounds with respect to the insertion-deletion metric and give some optimal constructions.



\end{document}